\setlist[enumerate]{leftmargin=2em,itemindent=0em, labelindent=0pt,labelwidth=1.5em,labelsep=.5em, align=left, noitemsep}
\newlist{txtenum}{enumerate}{1}
\setlist[txtenum]{leftmargin=0em,itemindent=1.5em, labelindent=0pt,labelwidth=1em,labelsep=.5em, align=left}
\theoremstyle{plain}
\newtheorem{theorem}{Theorem}
\newtheorem*{theorem*}{Theorem}
\newtheorem*{proposition*}{Proposition}
\newtheorem*{corollary*}{Corollary}
\newtheorem{lemma}[theorem]{Lemma}
\newtheorem*{lemma*}{Lemma}
\newtheorem*{observation*}{Observation}
\newtheorem*{conjecture*}{Conjecture}
\newtheorem*{question*}{Question}
\newtheorem*{questions*}{Questions}
\newtheorem*{problem*}{Problem}
\newtheorem*{problems*}{Problems}
\newtheorem*{openproblem*}{Open Problem}
\theoremstyle{definition}
\newtheorem{definition}[theorem]{Definition}
\newtheorem*{definition*}{Definition}
\newtheorem*{example*}{Example}
\newtheorem*{exercise*}{Exercise}
\newtheorem*{remark*}{Remark}
\newtheorem*{remarks*}{Remarks}
\theoremstyle{remark}
\newtheorem*{claim*}{Claim}
\newcommand{\subclass}[1]{}
\newcommand{\enumTi}[1]{\renewcommand{\theenumi}{#1}}
\newcommand{\alphenumi}{\enumTi{\alph{enumi}}}
\newcommand{\romenumi}{\enumTi{\roman{enumi}}}
\newlength{\hspaceforlengthglumpf}
\renewcommand{\em}{\sl}
\DeclareMathOperator{\tr}{tr}
\newcommand{\lt}{\left}
\newcommand{\rt}{\right}
\newcommand{\abs}[1]{{\lt\lvert{#1}\rt\rvert}}
\newcommand{\nfrac}[2]{{\nicefrac{#1}{#2}}}
\newcommand{\NN}{\mathbb{N}}
\newcommand{\RR}{\mathbb{R}}
\newcommand{\TT}{\mathbb{T}}
\newcommand{\ZZ}{\mathbb{Z}}
\DeclareMathOperator*{\Prb}{\mathbb{P}}
\DeclareMathOperator*{\Exp}{\mathbb{E}}
\newcommand{\bra}[1]{{\lt< #1 \rt|}}
\newcommand{\ket}[1]{{\lt| #1 \rt>}}
\newlength{\algotabbingwidth}
\renewcommand{\paragraph}[1]{\medskip\noindent{\textsl{#1.}}}
\newcommand{\sq}{\boxempty}
\newcommand{\Algo}{\textup{\textbf{\texttt{A}}}}
\newcommand{\Ee}{\mathcal E}
\newcommand{\Tn}[1][n]{\TT^{#1}}
\newcommand{\Tone}{\TT}
\newcommand{\dTn}[1][n]{\{0,\nfrac13,\nfrac23\}^{#1}}
\newcommand{\mutinf}{\mathbb I}
\newcommand{\entropy}{\mathbb H}
\newcommand{\law}[1]{\mathcal{#1}}
\newcommand{\lF}{\law{F}}
\begin{document}
\title{Information content of queries in training Parameterized Quantum Circuits}
\author{Evgenii Dolzhkov$^{a}$ \and Bahman Ghandchi$^{a,b}$ \and Dirk Oliver Theis$^{a,b}$
 \\[1ex]
   \small $^a$ Theoretical Computer Science, University of Tartu, Estonia\\
   \small $^b$ Ketita Labs {\tiny O\"U}, Tartu, Estonia\\
   \small \texttt{ghandchi@}\{\texttt{ketita.com}, \texttt{ut.ee}\}, \texttt{dotheis@}\{\texttt{ketita.com}, \texttt{ut.ee}\}%
}
\date{Version: Fri Mar 29 17:49:52 CET 2019
  \\
  Compiled: \currenttime}
\maketitle

\begin{abstract}\sl%
  Parameterized quantum circuits (PQC, aka, variational quantum circuits) are among the proposals for a computational advantage over classical computation of near-term (not error corrected) digital quantum computers.  PQCs have to be ``trained'' --- i.e., the expectation value function has to be maximized over the space of parameters.

  This paper deals with the number of samples (or ``runs'' of the quantum computer) which are required to train the PQC, and approaches it from an information theoretic viewpoint.  The main take-away is a disparity in the large amount of information contained in a single exact evaluation of the expectation value, vs the exponentially small amount contained in the random sample obtained from a single run of the quantum circuit.
  \par\small%
  \textbf{Keywords:} Near-term quantum computing; parameterized quantum circuits.
\end{abstract}

\section{Introduction}\label{sec:intro}
Hybrid quantum-classical computing with parameterized (or variational) quantum circuits (PQCs) works by alternately running the parameterized quantum circuit on a digital, gate-based quantum computer, and updating parameters in classical hard- and software.  The hybrid process aims to find parameter settings which optimize some objective function derived from the measurement results of the PQC, for example with the goal to find the ground state of the measurement Hamiltonian.  This process has become known as ``training'' the PQC.

For the purpose of this paper, a Parameterized Quantum Circuit consists of a sequence of quantum operations, applied to a known initial state which we denote by $\ket0\bra0$, and followed by a measurement.  Some of the quantum operations are unitaries of the form
\begin{equation}\label{eq:ham-unitary}
  \rho \mapsto e^{-\pi i x_j H_j}\rho e^{\pi i x_j H_j}, \text{ $j=1,\dots,n$,}
\end{equation}
where the $H_j$ are hermitian operators and $x\in\RR^n$ is the vector of parameters.  For simplicity(!), we assume that the $H_j$ have $\pm1$ eigenvalues.  (We allow more general dependence on the parameters in Section~\ref{sec:eval-queries}.)  We also assume that the observable in the final measurement has eigenvalues $\pm1$.  Hence, a single run of the quantum circuit (with measurement) with parameters set to~$x$ yields a random number in $\pm1$, whose expectation we denote by $f(x)$, and refer to as the \textit{expectation value function} of the PQC.  In this simplified setting, the training problem is this:
\begin{equation*}
  \text{maximize } f(x) \text{ over $x\in\RR^n$.}
\end{equation*}
(Note that $n$ is the number of parameters, not the number of qubits.)

Even though, in applications, a good local maximum is often sufficient, training PQCs is known to be difficult for a variety of reasons.  The least of it is that, as a non-concave maximization\footnote{Sorry.  Next time, we'll do ``minimize'' and ``convex''.} problem, the training problem is likely to be NP-hard:  But classical neural network training has the same property, and it is not a huge problem there.  More specific to the quantum case is the existence of ``plateaus'': large regions of the parameter space where the gradient is close to~0 \cite{McClean-Boixo-Smelyanskiy-Babbush-Neven:barren:2018}.  While training seems to work fine in practice with a small number of qubits, the exponential dependence on the number of qubits of indicators of ``trouble'' are worrysome.
In this paper, we add one new worrying perspective to the discussion: The information content of the random output of a run of the PQC.  For that we consider a setting which is very generous to the designer of a training algorithm: The algorithm is only ever used on a fixed~$n$, and a fixed, finite number (depending on~$n$) of functions $f_c$, $c\in\mathcal C$, all of which are known to the algorithm.  The algorithm itself can be randomized.  The algorithm has infinite computational resources; e.g., it can represent real numbers exactly, and make instantaneous computations on them (for parameters and expectation values).

Formally, we define the following.  A \textit{sample query} consists of the training algorithm setting the parameters to an $x\in\RR^n$ \textsl{ad libitum}, and then running once the quantum circuit with this setting, retrieving the resulting random number $F \in \{\pm1\}$ with $\Exp F=f(x)$.  In contrast, in an \textit{evaluation query} after setting the parameters \textsl{ad libitum,} the algorithm is given the real number $f(x)$ exactly.

The success of the algorithm is only measured in some definition of average-case --- not worst-case --- over $c\in \mathcal C$ and over its internal randomness.
The following algorithm and theorem underlines just how ridiculously generous the compuational model is.

\begin{algorithm}
  \TitleOfAlgo{Omnipotent-Algorithm}
  \Input{Evaluation-query access to a function $f\in\{f_c\mid c\in\mathcal C\}$}
  \Output{$x\in\RR^n$ maximizing $f(x)$}

  Pick a random parameter setting $x\in\RR^n$ \label{step:random-x}\\
  Query $f(x)$. \\
  Iterating over all $c\in \mathcal C$, find one with $f_c(x) = f(x)$\\
  Look up the parameter setting $x^*$ maximizing $f_c$ in a table\\
  output $x^*$.
\end{algorithm}

\begin{theorem}\label{thm:super}
  Let $\mu$ be an arbitrary absolutely continuous probability measure on $\RR^n$.
  If in Step~\ref{step:random-x}, $x$ is drawn according to~$\mu$, then the Omnipotent Algorithm succeeds with probability~1.
\end{theorem}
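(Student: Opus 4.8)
The plan is to exploit the fact that each expectation value function $f_c$ is real-analytic in the parameters, so that two functions $f_c,f_{c'}$ which are not identically equal can coincide only on a set of Lebesgue measure zero. Since $\mathcal C$ is finite and $\mu$ is absolutely continuous, a random $x\sim\mu$ will, with probability $1$, avoid all of these coincidence sets; at such an $x$ every $c$ with $f_c(x)=f(x)$ must in fact satisfy $f_c\equiv f$, and hence share the maximizer looked up in the table.

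First I would make the analyticity precise. Because each $H_j$ has eigenvalues $\pm1$, we have $H_j^2=\Id$, and hence
\begin{equation*}
  e^{-\pi i x_j H_j}=\cos(\pi x_j)\,\Id-i\sin(\pi x_j)\,H_j .
\end{equation*}
Substituting this into the operations \eqref{eq:ham-unitary} and tracing the resulting state $\rho(x)$ against the fixed, $\pm1$-valued measurement observable $M$ shows that $f_c(x)=\tr\!\bigl(\rho(x)\,M\bigr)$ is a finite multivariate trigonometric polynomial in $x_1,\dots,x_n$. In particular each $f_c$ extends to an entire function on $\CC^n$ and is therefore real-analytic on $\RR^n$.

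Next I would fix the true function $f=f_{c_0}$ and define the \emph{bad set}
\begin{equation*}
  B:=\bigcup_{\substack{c\in\mathcal C\\ f_c\not\equiv f_{c_0}}}\bigl\{x\in\RR^n:\,(f_{c_0}-f_c)(x)=0\bigr\}.
\end{equation*}
For each $c$ appearing in this union, $f_{c_0}-f_c$ is a \emph{nonzero} real-analytic function, and the standard fact that the zero set of a nonzero real-analytic function on $\RR^n$ has Lebesgue measure zero shows that each set in the union is Lebesgue-null. As $\mathcal C$ is finite, $B$ is Lebesgue-null, and absolute continuity of $\mu$ yields $\mu(B)=0$.

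Finally, if $x\notin B$ then the true $c_0$ certainly satisfies $f_{c_0}(x)=f(x)$, and \emph{any} $c$ with $f_c(x)=f(x)$ must have $f_c\equiv f_{c_0}$, so $\argmax f_c=\argmax f_{c_0}$ and the $x^*$ returned in Step~4 maximizes $f$. Hence the algorithm succeeds whenever $x\notin B$, i.e.\ with probability $1$. The only genuine work is the zero-set lemma; the cleanest route is induction on $n$, using that a nonzero real-analytic function of one real variable has isolated zeros together with Fubini's theorem. I expect this lemma, rather than the PQC-specific bookkeeping, to be the main technical point, though it is entirely standard.
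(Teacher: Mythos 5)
Your proposal is correct and follows essentially the same route as the paper: the paper's proof (in Section~\ref{sec:eval-queries}, stated there for countable families and general analytic parameter dependence) likewise observes that the expectation value functions are real-analytic, that zero sets of nonzero real-analytic functions are Lebesgue-null, and concludes via absolute continuity of $\mu$ that the map $f_c \mapsto (x, f_c(x))$ is injective with probability~$1$. You merely flesh out what the paper defers to a reference --- the explicit trigonometric-polynomial form via $H_j^2=\Id$, the finite union of bad sets, and the harmless case $f_c\equiv f_{c_0}$ --- which is a welcome but not a different argument.
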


The proof of this theorem, in Section~\ref{sec:eval-queries}, will show that with probability~1 over the choice of~$x$, the mapping $f_c \mapsto (x,f_c(x))$ is one-to-one.

In infomation theoretic terms, if $C$ is randomly chosen in $\mathcal C$, then a single evaluation query of $f_C$ at a random point contains all information about~$f_C$:
\begin{equation}\label{eq:eval-q:cond-ent-0}
  \entropy\bigl(f_A \mid (X,f_A(X)) \bigr) =0.
\end{equation}

Our \textsl{Superman} algorithms with infinite memory and tables with worked-out solutions hit Kryptonite, when we replace evaluation-query access by sample-query access --- even when we allow the output to be off by a significant amount from the true maximum.  We propose the the following definition.

\begin{definition}
  Let \Algo{} be a Las-Vegas (randomized) algorithm which is given sample-query access to one (unknown) element in a family~$\mathcal C$ of PQCs with~$n$ parameters.  For $\alpha\in\RR_+$, we say that \Algo{} \textit{$\alpha$-succeeds} in the training problem, if it outputs a parameter setting $x^*$ with $f(x^*) \ge \max_x f(x) -\alpha$, after performing a number of sample quries whose number depends on: the internal randomness of \Algo{}; the random choice of $c$ uniformly in $\mathcal C$; and the randomness in the sample query results.
\end{definition}

\begin{theorem}\label{thm:sample}
  There are constants $c>1$ and $\alpha \approx 1$ and a family of simple PQCs ($3^n$ for each number of qubits~$n$) such that every $\alpha$-successful training algorithm, with probability $1-c^{-n}$, requires at least $c^n$ sample queries.
\end{theorem}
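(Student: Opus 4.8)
The plan is to exhibit a family in which the expectation-value functions are sharply peaked \emph{product} functions whose average absolute value at any fixed point is exponentially small, so that a single sample query carries only exponentially little information about which member of the family is present.

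\emph{Construction.} For each secret $s\in\{0,\tfrac13,\tfrac23\}^n$ I would take the $n$-qubit product circuit whose $j$-th qubit undergoes $H$, the parameterized gate $e^{-\pi i x_j Z}$, a fixed phase $e^{\pi i s_j Z}$, and $H$ again, followed by measurement of the $\pm1$-observable $\bigotimes_j Z_j$. A one-line computation gives $f_s(x)=\prod_{j=1}^n \cos\bigl(2\pi(x_j-s_j)\bigr)$, which is a legitimate PQC in the sense of Eq.~\eqref{eq:ham-unitary} (each $H_j=Z$ has $\pm1$ eigenvalues, the shift gates are fixed), has $\max_x f_s = f_s(s)=1$, and yields $3^n$ members. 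I set $\alpha=1-\gamma$ for a small constant $\gamma\in(0,1)$, so $\alpha$-success means returning $x^*$ with $f_s(x^*)\ge\gamma$.

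\emph{Key estimate.} Writing $S$ for a uniformly random secret, the product structure factorizes the mean absolute bias into $\Exp_S\abs{f_S(x)}=\prod_{j=1}^n\bigl(\tfrac13\sum_{k=0}^2\abs{\cos(2\pi(x_j-k/3))}\bigr)$. I would show each factor is at most $2/3$: the three cosines are $120^\circ$ apart, their signed sum vanishes, and a short extremal analysis shows $\tfrac13\sum_k\abs{\cos(\cdot)}$ attains its maximum $2/3$ exactly at the grid points. Hence $\Exp_S\abs{f_S(x)}\le(2/3)^n$ for every $x\in\RR^n$ — the quantitative form of ``exponentially little information per run.''

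\emph{Lower bound.} Introduce the null process in which every sample query returns a fair $\pm1$ coin; under it the whole transcript, hence the output $x^*$, is independent of $S$. First, null-success is rare: since $x^*\perp S$, Markov applied to the key estimate gives $\Prb_{\mathrm{null}}[f_S(x^*)\ge\gamma]=\Exp_{x^*}\Prb_S[f_S(x^*)\ge\gamma]\le(2/3)^n/\gamma$. Second, the real process (biases $f_S(x_t)$) and the null process are exponentially close: at any query the two result-laws differ by $\abs{f_S(x_t)}/2$ in total variation, and along the null path the query points are chosen independently of $S$, so a hybrid/coupling argument over the $T$ rounds yields $\Exp_S d_{\mathrm{TV}}(\mathrm{real}_S,\mathrm{null})\le \tfrac{T}{2}(2/3)^n$. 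Combining, any algorithm making at most $T$ queries has $\Prb[\alpha\text{-success}]\le (2/3)^n/\gamma+\tfrac{T}{2}(2/3)^n$. Choosing $T=c^n$ with $1<c<\sqrt{3/2}$ drives the right-hand side below $c^{-n}$; as the algorithm is Las~Vegas, $\{\text{runtime}\le T\}\subseteq\{\text{success within }T\}$, so its runtime exceeds $c^n$ with probability $\ge 1-c^{-n}$.

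\emph{Main obstacle.} The delicate point is the adaptive coupling: I must guarantee that the query points appearing in the total-variation bound are genuinely decoupled from $S$ (they are, along the null path), and I must avoid KL-type bounds, which blow up because $\abs{f_s(x)}$ reaches $1$ at grid points — bounded divergences (total variation, or Hellinger) handled by an explicit round-by-round coupling are the safe route. The remaining items (the single-coordinate extremal bound, realizability of the shift gate, and reconciling the two appearances of the constant $c$ via absorbing the factors $1/\gamma,\tfrac12$ for large $n$) are routine.
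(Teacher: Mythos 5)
Your proof is correct, but it takes a genuinely different route from the paper's. The paper flattens $f_A$ only on a large plateau $P_0=\{x : d(0,x)>n/2\}$, couples the real run to a run on the flattened function $\bar f$ (Lemma~\ref{lem:divergence}, using the sup-bound $\abs{f_0}<(2/3)^{n/2}$ on $P_0$ from Lemma~\ref{lem:def-+-ieq--delta}), and then invokes a separate combinatorial core --- the plateau game of \S\ref{ssec:sample-queries:game} plus the Hoeffding anti-concentration estimate of Lemma~\ref{lem:def-+-ieq--p} --- to show the algorithm rarely escapes a random plateau. You instead compare against a single secret-independent ``null'' process (fair coins at \emph{every} query point, i.e.\ $\eta=0$ globally rather than only on $P_0$), and replace the pair sup-on-plateau plus anti-concentration by the one uniform average estimate $\Exp_S\abs{f_S(x)}\le(2/3)^n$, finishing with Markov; this eliminates $P_0$, the game lemma (Lemma~\ref{lem:stupid-game}), Hoeffding, and the paper's ``query-your-output'' device that causes the $+1$ in~\eqref{eq:ub-on-nqueries}, since under the null law the output $x^*$ is independent of $S$ and Markov applies to it directly. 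Your round-by-round coupling is the same machinery as Lemma~\ref{lem:divergence}, and your resolution of the adaptivity pitfall --- bounding divergence along the null path, whose query points are a deterministic function of the shared randomness and hence independent of $S$ --- is exactly right; it plays the role of the paper's non-adaptivity collapse in Appendix~\ref{apx:non-adapt}. What each buys: the paper's Lemma~\ref{lem:tech} is a reusable reduction that works under the weaker hypothesis that $f$ is merely uniformly small on a large set, whereas your argument needs the stronger cancellation property that the \emph{family-averaged} bias is exponentially small at every point --- which your pure-cosine family satisfies exactly --- and in return is shorter and yields a clean explicit constant ($1<c<\sqrt{3/2}$). Two cosmetic points: your single-coordinate factor $\tfrac13\sum_k\abs{\cos(2\pi(t-k/3))}$ attains its maximum $2/3$ not only at grid points but also at the half-grid points $t\in\nfrac16+\tfrac13\ZZ$ (values $\nfrac12,\nfrac12,1$) --- harmless, since only the upper bound is used; and your explicit circuit with $f_s(x)=\prod_j\cos(2\pi(x_j-s_j))$ is if anything simpler than the paper's construction via the implicit rotation axis $\phi$, whose stated target function should read $h(x)=\nfrac13+\tfrac23\cos(2\pi x)$ for the interpolation at $x=0,\nfrac13,\nfrac23$ and the claim $\max_x f_0(x)=1$ to hold.
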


In terms of information content of queries: For any $m \ll c^n$, for the queries at the (random) parameter settings $X_1,\dots,X_m$ that a randomized training algorithm performs, and the corresponding sample query results $Q_1,\dots,Q_m$, if~$C \in \mathcal C$ is chosen uniformly at random, we have
\begin{equation}\label{eq:sample-q:mutinf-exp-small}
  \mutinf\bigl(C : (X_1,Q_1,\dots,X_m,Q_m) \bigr) \le m 2^{-\Omega(n)}\entropy(C).
\end{equation}

Clearly, the difference in the perfect performance of the algorithms for evaluation and sample queries lies in the information content of the queries, \eqref{eq:eval-q:cond-ent-0} vs \eqref{eq:sample-q:mutinf-exp-small}.  The contribution of this paper lies in bringing the consideration of information content of queries to the table with regards to algorithms and lower bounds for PQC training.

\paragraph{This paper is organized as follows}
In Section~\ref{sec:eval-queries}, we prove Theorem~\ref{thm:super}, Section~\ref{sec:sample-queries} is dedicted to the proof of Theorem~\ref{thm:sample}.  We conclude with an outlook on related questions.  We shoved some technicalities out of the weary eye of the reader into the appendix.

\section{Evaluation queries are powerful}\label{sec:eval-queries}
We consider a family of parameterized quantum circuits which is more general than in the rest of the paper: The only restriction is that the parameters $x\in\RR^m$ occur only in unitary quantum operations of the form
\begin{equation*}
  \ket{\psi} \mapsto e^{-i \sum_j \theta_j(x) H_j} \ket{\psi},
\end{equation*}
for arbitrary Hermitian operators $H_j$ and~$H$ on no matter how many qubits, and arbitrary analytic functions $\theta_j\colon\RR^m \to \RR$.

In this section, we prove the following theorem
\begin{theorem}
  Let $\mathscr C$ be a countable family of parameterized quantum circuits, each taking exactly $m$ parameters, and such that the expectation value functions of the circuits are all distinct.

  If $x\in\RR^m$ is chosen randomly according to an absolutely continuous measure on $\RR^m$, then, with probability~1, the circuit $C\in\mathscr C$ is uniquely determined by a single query of its expectation value (evaluation query) with parameters~$x$.
\end{theorem}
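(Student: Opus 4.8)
The plan is to reduce the statement to a measure-theoretic fact about real-analytic functions. It suffices to produce a Lebesgue-null (equivalently, since $\mu$ is absolutely continuous, a $\mu$-null) set of parameters outside of which the map $C \mapsto f_C(x)$ is injective, where $f_C$ denotes the expectation value function of $C$. Because $\mathscr C$ is countable, the set of unordered pairs $\{C,C'\}$ of distinct circuits is countable, so it is enough to show that for each such pair the \emph{collision set} $Z_{C,C'} := \{x\in\RR^m : f_C(x) = f_{C'}(x)\}$ is Lebesgue-null; the countable union $\bigcup Z_{C,C'}$ is then null, and for $x$ outside it every attained value $f_C(x)$ comes from exactly one circuit.

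First I would check that each $f_C$ is real-analytic on $\RR^m$. The circuit's total unitary $U_C(x)$ is a finite product of parameter-independent operations and factors of the form $e^{-i\sum_j \theta_j(x) H_j}$. Since each $\theta_j$ is analytic, the matrix-valued map $x \mapsto \sum_j \theta_j(x) H_j$ is real-analytic, the matrix exponential is entire, and a finite product of real-analytic matrix-valued functions is real-analytic; hence every entry of $U_C(x)$ is real-analytic in $x$. As $f_C(x) = \bra{0} U_C(x)^\dagger M_C\, U_C(x)\ket{0}$ is a fixed quadratic form (determined by the circuit's measurement observable $M_C$) in these entries and their complex conjugates, $f_C$ is real-analytic.

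Now fix a pair $C \neq C'$ and put $g := f_C - f_{C'}$, a real-analytic function on the connected open set $\RR^m$. By hypothesis the expectation value functions are pairwise distinct, so $g \not\equiv 0$. The single nontrivial input is the classical fact that the zero set of a real-analytic function which is not identically zero on a connected open subset of $\RR^m$ has Lebesgue measure zero. Applying it to $g$ gives $\abs{Z_{C,C'}} = 0$, which completes the argument.

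The heart of the matter --- and presumably the technicality deferred to the appendix --- is this measure-zero statement; everything else is bookkeeping. I would prove it by induction on $m$. For $m=1$, the identity theorem forces the zeros of a nonzero one-variable analytic function to be isolated, hence at most countably many. For the step, write $x = (x', x_m)$ and integrate with Fubini: for fixed $x'$ the slice $g(x', \cdot)$ is one-variable analytic, so its zero set is null unless $g(x', \cdot) \equiv 0$. The exceptional $x'$ form the set $B = \bigcap_{k\ge0}\{x' : \partial_{x_m}^k g(x',0) = 0\}$; if every $\partial_{x_m}^k g(\cdot,0)$ vanished identically then $g \equiv 0$, contradicting $g\not\equiv0$, so some $\partial_{x_m}^{k_0} g(\cdot,0)$ is a nonzero real-analytic function on $\RR^{m-1}$, whence $B$ is contained in its (null, by induction) zero set. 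Thus almost every slice contributes zero $x_m$-measure and the bad slices have zero $x'$-measure, giving $\abs{\{g=0\}}=0$ by Fubini. The main obstacle is purely in getting this lemma stated and proved cleanly; the analyticity of $f_C$ and the countability bookkeeping are routine.
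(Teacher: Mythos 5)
Your proposal is correct and follows essentially the same route as the paper: real-analyticity of the expectation value functions, the fact that the zero set of a nonzero real-analytic function on $\RR^m$ is Lebesgue-null, and a countable union over pairs of circuits. The only difference is one of detail --- the paper cites this zero-set fact and defers the verification to a reference, whereas you also supply a clean Fubini-plus-induction proof of it, which is a standard and correct argument.
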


\begin{proof}
  The expectation value function is real analytic.  Real analytic function have the property that zero-sets are lower-dimensional sub-varieties.  Hence, their measure, wrt any absolutely continuous probability measure, is $0$.  Consequently, the set of points~$x$ for which two such functions coincide also has measure zero.  We refer to \cite{Dolzhkov:MSc:2020} for the details.
\end{proof}

\section{Sample-queries are poor}\label{sec:sample-queries}
\paragraph{We will need some notation} %
It can be seen~\cite{GilVidal-Theis:CalcPQC:2018} that $f$ is periodic in each coordinate, so we set $\Tone := \RR/\ZZ$; note that $\Tn = (\RR/\ZZ)^n = \RR^n/\ZZ^n$.  



For a function~$f$ with range in $[-1,+1]$, we write $\lF_f(x)$ for the probability distribution with support contained in $\{\pm1\}$ and mean $f(x)$.  Modeling the behavior of parameterized quantum circuits, we assume that, conditioned on $f$ and~$x$, samples from $\lF$ are independent from each other and from all other random variables.

For any function $f_0\colon\Tn\to\RR$ and $a \in \dTn \subset \Tn$, we define the function $f_a := f_0(\sq -a)$; similarly, for a set $P_0 \subset \Tn$, we define $P_a := \{ x-a \mid x \in P_0\}$.

We are now well equipped to describe the technical approach.
\subsection{The ``plateau game''}\label{ssec:sample-queries:game}
For $n\in \NN$, consider the following \textit{plateau game}\footnote{This isn't really a game, it's an active learning problem --- but games are so much more fun\dots}, played between Alice and Bob:
\begin{equation}\label{TheGame}\tag{$\mathscr G_n$}
    \parbox{.85\linewidth}{\sl%
      There is a set $P_0 \subset \Tn$ with $0\notin P_0$, which is known to both Alice and Bob.

      At the beginning of the game, Alice chooses an\footnotemark{} $a\in \dTn$, hidden from Bob.
      Then, the players proceed in rounds.  In each round, Bob chooses an element $x\in\Tn$, and asks Alice, ``Is $x \in P_{a}$?'', to which Alice answers truthfully.

      The game ends when Bob queries a point which is not in $P_{a}$, in which case Bob wins; otherwise the game proceeds forever (and Alice wins).
    }%
\end{equation}
\footnotetext{At this point, it deserves to be emphasizes that all results hold for other integers $k \ge 3$ (with different constants in the estimates); the use of ``$3$'' is a convenience.}
Bob can make randomized queries, i.e., Alice cannot ``change her mind'' about her choice of~$a$ after the first round begins.

Suppose that Bob makes randomized queries and wins after~$M_a$ rounds, in the case that Alice picks~$a$ as her hidden point ($M_a$ is a random variable).
Suppose further that Alice chooses her hidden point uniformly at random in $\dTn$; since it is a random variable, we denote it by a capital letter, $A$.
Finally, let
\begin{equation}\label{eq:def-p}
  p := \sup_{x\in\Tn} \Prb\bigl( P_A \not\ni x \bigr)
\end{equation}
denote the highest probability of Bob winning in the first round.

Here is the lower bound on the number of rounds it takes Bob to succeed.
\begin{lemma}\label{lem:stupid-game}
  $\displaystyle
    \Prb\bigl( M_A \le m ) \le pm
    $
\end{lemma}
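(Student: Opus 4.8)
The plan is to strip the adaptivity out of Bob's strategy by conditioning on his internal randomness, and then to handle each round by a single application of the definition of~$p$ via a union bound. The payoff of conditioning is that, before Bob wins, he has no branching information to act on.

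First I would record the structural observation that drives everything: \emph{as long as Bob has not yet won, every answer he has received is ``yes''.} Writing $R$ for Bob's internal randomness, I claim there are deterministic points $x_t(R)$ such that, on the event that Bob has survived through round $t-1$, his $t$-th query equals $x_t(R)$. One proves this by induction on~$t$. The base case $t=1$ is immediate, since the first query precedes any answer. For the inductive step, the only history consistent with survival is the constant ``yes'' history, so Bob's choice of $X_t$ is a function of $R$ and this one fixed history alone; in particular it does not depend on Alice's hidden point~$A$. Thus, conditioned on $R=r$, Bob queries the fixed sequence $x_1(r),x_2(r),\dots$ in order until he first lands outside $P_A$, and this sequence never sees~$A$.

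With this in hand, fix $r$. Comparing actual play against the all-``yes'' continuation shows that, given $R=r$, one has $M_A=\min\{t:x_t(r)\notin P_A\}$, so the event $\{M_A\le m\}$ is exactly $\{\exists\,t\le m:\ x_t(r)\notin P_A\}$. Since $A$ is uniform on the finite set $\dTn$ and independent of $R$, a union bound over the fixed points $x_1(r),\dots,x_m(r)$ gives
\begin{equation*}
  \Prb_A\bigl(M_A \le m \mid R=r\bigr)
   \le \sum_{t=1}^m \Prb_A\bigl(x_t(r) \notin P_A\bigr)
   \le \sum_{t=1}^m p = pm,
\end{equation*}
where the middle step applies $p=\sup_x \Prb(P_A\not\ni x)$ to each fixed point $x_t(r)\in\Tn$. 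Averaging over the internal randomness then yields
\begin{equation*}
  \Prb\bigl(M_A \le m\bigr)
   = \Exp_R\bigl[\Prb_A(M_A \le m \mid R)\bigr]
   \le pm,
\end{equation*}
which is the assertion.

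The one substantive step is the structural observation; the rest is a routine union bound. Its subtlety is worth flagging, since it is exactly where a naive argument would stumble. Bob \emph{is} adaptive, and each ``yes'' answer genuinely shrinks the set of hidden points consistent with the history (a ``yes'' to query~$x$ forces $A\in P_0-x$), so one might fear that the per-round failure probability must be measured against this conditioned, non-uniform posterior on~$A$ rather than against the uniform~$A$ defining~$p$. What rescues the argument is that fixing $R=r$ already determines the entire query sequence $x_1(r),\dots,x_m(r)$ \emph{without reference to~$A$}, turning $\{M_A\le m\}$ into a statement purely about membership of fixed points in $P_A$, with $A$ still uniform. Recognizing that the constant ``yes'' history carries no branching, and hence no $A$-dependence, is the crux that makes the definition of~$p$ apply termwise.
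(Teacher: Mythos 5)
Your proof is correct and takes essentially the same route as the paper: both arguments de-adaptivize Bob by noting that the constant ``yes'' history carries no information about $A$, so his strategy collapses to a probability distribution over deterministic query sequences --- your conditioning on the internal randomness $R=r$ and then averaging over $R$ is exactly the paper's integration of the win-indicator against a fixed measure $\mu$ on $(\Tn)^m$, followed by the same termwise union bound via the definition of~$p$. The only cosmetic difference is that you spell out the induction establishing non-adaptivity inline, where the paper states it as a remark and defers the formal justification to Appendix~\ref{apx:non-adapt}.
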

Before we give the proof, a remark.  While the setting of the game allows Bob to adaptively choose his next move based on Alice's answers to his previous questions, that adaptivity is not really present, as Bob --- unless he has already won --- (knows that he) always gets the same answer: ``Yes''.  Indeed, the transition from the training problem to the game (performed in Lemma~\ref{lem:tech} below) takes the adaptivity out of the equation.

Hence, Bob's strategy is a single probability measure, which he may just as well choose before Alice picks her point.  We refer to Appendix~\ref{apx:non-adapt} for the formal justification.

\begin{proof}[Proof of Lemma~\ref{lem:stupid-game}]
  Let $m \in\NN$.  To decide $M_A \le m$, we need to observe the game only for the first (at most) $m$ rounds.  For $x\in (\Tn)^m$, $a\in\dTn$, we let $\beta(a,x) := 1$, if Bob wins in the first~$m$ rounds with the questions $x_1,x_2,\dots,x_m$, i.e., if there is a $j$ with $x_j\not\in P_a$; and $\beta(a,x) := 0$ otherwise.

  By the above remark, the set of Bob's strategies is the set of probability distributions~$\mu$ on $(\Tn)^m$, and $\mu$ does not depend on~$a$.  Now we just compute (the sums are over all $\in\dTn$):
  \begin{align*}
    \Prb\bigl( M_A \le m )
    &=
      \tfrac{1}{3^n} \sum_a  \int_{(\Tn)^m} \beta(a,x) \,d\mu(x)
    \\
    &=
      \int_{(\Tn)^m} \tfrac{1}{3^n}\sum_a \beta(a,x) \,d\mu(x)
    \\
    &=
      \int_{(\Tn)^m} \Prb( \exists j\colon x_j\notin P_a ) \,d\mu(x)
    \\
    &\le
      m\cdot \sup_{x\in (\Tn)^m} \Prb(x_j\notin P_a )
    \\
    &=
      m p.
  \end{align*}
  This completes the proof of Lemma~\ref{lem:stupid-game}.
\end{proof}

\subsection{From the training problem to the plateau game}\label{ssec:sample-queries:core-lemma}
Now we come to the technical lemma which connects the PQC-training problem to the plateau game~\eqref{TheGame}.
\begin{lemma}\label{lem:tech}
  Let $n \in \NN$,
  $f_0\colon \Tn \to \lt[ -1 , +1 \rt]$ a continuous function, and $P_0\subset\Tn$.
  Further, let $\eta \in [-1,+1]$, $0\le \delta < \alpha$ (we will need $\delta \ll 1$ and $\alpha \gg \delta$ for the conclusion to make sense).
  We assume
  \begin{enumerate}[label=(\roman*)]
  \item\label{lem:tech:cond:near-const} $\displaystyle |f_0(x) - \eta| < \delta$ for all $x\in P_0$
  \item\label{lem:tech:cond:max} $\displaystyle f_0(0) = \max_x f_0(x)$;
  \item\label{lem:tech:cond:diff} $\max_{x\in P_0} f_0(x) + \alpha < f_0(0)$ for an $\alpha \gg \delta$.
  \end{enumerate}
  Consider an $\alpha$-successful algorithm \Algo{} for maximizing a function $f\in \{f_a \mid a\in \dTn\}$, which has sample-query access to~$f$.  With $A$ chosen uniformly at random from $\dTn$, denote the number of sample queries performed by the algorithm for maximizing $f=f_A$ before it terminates by the random variable $T_A$.

  With~$p$ as in~\eqref{eq:def-p}, we have that
  \begin{equation}\label{eq:ub-on-nqueries}
    \Prb( T_A \le m ) \le (p+\delta/2) m  +1
  \end{equation}
\end{lemma}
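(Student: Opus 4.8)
The plan is to reduce the training problem to the plateau game~\eqref{TheGame} by an explicit simulation-and-coupling argument, and then quote Lemma~\ref{lem:stupid-game}. The parallel definitions $f_a = f_0(\sq-a)$ and $P_a = P_0-a$ are arranged precisely so that the game set $P_A$ is the plateau of $f_A$, i.e.\ the region on which assumption~\ref{lem:tech:cond:near-const} forces $f_A$ to lie within $\delta$ of $\eta$ (matching the shift conventions amounts to identifying Alice's hidden point with $-A$, which is again uniform on $\dTn$ by symmetry, and leaves $p$ unchanged). Two observations drive everything. First, by~\ref{lem:tech:cond:max} the maximum of $f_A$ equals $f_0(0)$, so any $\alpha$-successful output $x^*$ satisfies $f_A(x^*)\ge f_0(0)-\alpha$, which by~\ref{lem:tech:cond:diff} forces $x^*\notin P_A$. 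Second, at any plateau point $x\in P_A$ the genuine sample distribution $\lF_{f_A}(x)$ is, by~\ref{lem:tech:cond:near-const}, within total variation $|f_A(x)-\eta|/2<\delta/2$ of the fixed $\{\pm1\}$-valued distribution with mean $\eta$, which does not depend on~$A$.

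From $\Algo$ I would build a Bob strategy: Bob runs $\Algo$ internally, answering each of its sample queries with an independent draw from the fixed mean-$\eta$ distribution, and relaying the point $x$ that $\Algo$ queries to Alice as the question ``is $x\in P_A$?''. If Alice ever answers ``No'', Bob has already won; if $\Algo$ halts (all previous answers being ``Yes''), Bob plays one final round at the output $x^*$. Because the surrogate answers carry no information about~$A$, Bob's sequence of questions is a single probability measure on $(\Tn)^{m+1}$ independent of~$A$, hence a legitimate strategy in the sense of the remark preceding Lemma~\ref{lem:stupid-game}.

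To transfer this to the true run of $\Algo$ on $f_A$, I would couple, query by query, each genuine sample $\lF_{f_A}(x)$ with Bob's surrogate via a maximal coupling; by the bound above they agree except with probability $<\delta/2$ at each on-plateau query. Let $D$ be the event that a mismatch occurs within the first $m$ queries, so $\Prb(D)<m\delta/2$ by a union bound. On $D^c\cap\{T_A\le m\}$ the surrogate and genuine runs coincide, so $\Algo$ halts within $m$ queries and outputs an $\alpha$-good $x^*\notin P_A$; consequently Bob wins within the first $m+1$ rounds (either some relayed query already fell outside $P_A$, or all did and the final question at $x^*$ does). Thus $D^c\cap\{T_A\le m\}\subseteq\{M_A\le m+1\}$, and Lemma~\ref{lem:stupid-game} gives
\begin{align*}
  \Prb(T_A \le m)
  &\le \Prb(D) + \Prb(M_A \le m+1) \\
  &< \tfrac{m\delta}{2} + p(m+1)
     \le \bigl(p + \tfrac{\delta}{2}\bigr)m + 1 ,
\end{align*}
using $p\le 1$ in the last step.

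The step I expect to be most delicate is the coupling bookkeeping around termination: one must place the surrogate run, the genuine run, and Bob's sequence of questions on a common probability space so that ``halts within $m$ queries with off-plateau output'' transfers cleanly into ``Bob wins within $m+1$ rounds''. The extra round needed to query the output is exactly the source of the additive $+1$, and the per-query total-variation slack from~\ref{lem:tech:cond:near-const} is exactly the source of the $\delta/2$; arranging for these two contributions to appear additively — rather than as a multiplicative $(1+\delta)^m$ blow-up — is the crux of the estimate.
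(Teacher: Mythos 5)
Your proposal is correct and follows essentially the same route as the paper: couple the genuine run of \Algo{} with a surrogate run whose on-plateau samples are iid mean-$\eta$ coin flips (making the surrogate a legitimate, non-adaptive Bob strategy), bound the per-query divergence probability by $\delta/2$ via condition~\ref{lem:tech:cond:near-const}, force the output point $x^*\notin P_A$ via \ref{lem:tech:cond:max} and \ref{lem:tech:cond:diff} with one extra round accounting for the ``$+1$'', and invoke Lemma~\ref{lem:stupid-game}. If anything, you are slightly more careful than the paper on the shift conventions (the plateau of $f_a=f_0(\sq-a)$ is $P_0+a$ while the game uses $P_a=P_0-a$, which you fix by identifying Alice's point with $-A$), a point the paper passes over silently.
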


The proof fills the remainder of this subsection.
The idea is that we want to compare what the algorithm does when it samples according to $\lF_f$ to what it does when it samples according to $\lF_{\bar f}$, where
\begin{equation*}
  \bar f :=
 \begin{cases}
   f(x), &\text{ if $x \notin P_0$} \\
   \eta, &\text{ if $x \in P_0$.}
 \end{cases}
\end{equation*}

For that, we have to run two ``copies'' of the algorithm in parallel, but the random decisions must be coupled; cf Appendix~\ref{apx:coupling} for the technical details.
We say that the two runs \textit{diverge,} if the algorithms' inner states differ, or one of the queries result in a different answer.

\begin{lemma}\label{lem:divergence}
  Under the conditions of Lemma~\ref{lem:tech}, the probability that the coupled runs of the algorithm diverge at or before the $m$th query is at most $\delta m/2$.
\end{lemma}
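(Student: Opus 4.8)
The plan is to construct a coupling of the two runs under which, at each query, the two $\{\pm1\}$-valued sample results disagree with probability at most $\delta/2$, and then to conclude by a union bound over the first $m$ queries. Concretely, I would let the two copies share all their internal randomness, and at every query use the maximal (optimal) coupling of the two sample distributions $\lF_f(x)$ and $\lF_{\bar f}(x)$. The structural observation that drives everything is that the two runs evolve identically as long as every query so far has returned the same answer in both copies: since they start in the same inner state, consume the same internal random bits, and --- until a disagreement --- receive identical query results, their inner states stay equal and they therefore issue the same next query point $x$. Hence the only event that can make the runs diverge is a query whose two coupled samples differ; divergence is never caused ``spontaneously'' by the internal randomness.

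Next I would bound the per-query disagreement probability. A distribution on $\{\pm1\}$ with mean $\mu$ assigns probability $(1+\mu)/2$ to $+1$, so the total variation distance between $\lF_f(x)$ and $\lF_{\bar f}(x)$ equals $\tfrac12\lvert f(x)-\bar f(x)\rvert$. If $x$ lies outside the plateau then $f(x)=\bar f(x)$, the two distributions coincide, and under the maximal coupling the samples agree with probability $1$. If $x$ lies in the plateau then $\bar f(x)=\eta$, and hypothesis~\ref{lem:tech:cond:near-const} gives $\lvert f(x)-\eta\rvert<\delta$, so the total variation distance is below $\delta/2$ and the maximal coupling makes the samples disagree with probability below $\delta/2$. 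Thus, conditioned on any history in which the runs have not yet diverged, the next query disagrees with probability at most $\delta/2$.

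Finally I would assemble these into the claimed bound. Let $\tau$ be the index of the first query at which the runs diverge. The event $\{\tau=j\}$ is contained in the event that the runs have not diverged through step $j-1$ \emph{and} the $j$th coupled samples differ; conditioning on the non-divergence-through-$(j-1)$ history, the previous paragraph bounds this by $\delta/2$. Summing over $j=1,\dots,m$ yields
\begin{equation*}
  \Prb(\tau \le m) \le \sum_{j=1}^m \tfrac{\delta}{2} = \frac{\delta m}{2},
\end{equation*}
which is precisely the assertion of Lemma~\ref{lem:divergence}.

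I expect the main obstacle to be the honest construction of the coupling together with the measure-theoretic conditioning underlying the per-step bound: one must verify that on the no-prior-divergence event the next query point is a well-defined random variable \emph{shared} by both copies, so that the conditional disagreement probability really is governed by the total-variation distance at that point. This is exactly the technical content deferred to Appendix~\ref{apx:coupling}; once it is in place, the divergence lemma reduces to the short union bound above.
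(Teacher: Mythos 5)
Your proposal is correct and follows essentially the same route as the paper: shared internal randomness, a per-query coupling of $\lF_f(x)$ and $\lF_{\bar f}(x)$ under which the samples disagree with probability $\abs{f(x)-\bar f(x)}/2 \le \delta/2$ by condition~\ref{lem:tech:cond:near-const} (the paper's shared uniform stack in Appendix~\ref{apx:coupling} is exactly your maximal coupling for $\{\pm1\}$-valued distributions), and the observation that the runs issue identical query points until the first sample disagreement. The only cosmetic difference is that you sum over the first-divergence time $\tau$ (a union bound) where the paper bounds the no-divergence probability by $(1-\delta/2)^m$ and applies Bernoulli's inequality; both yield the same $\delta m/2$.
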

\begin{proof}
  As the random decisions of the two runs are coupled, for the runs of the algorithm to diverge, at least one sample-query has to give a different result for $\lF_f$ from $\lF_{\bar f}$.  If the first~$j$ query points and query results are the same, then, by coupling, the $(j+1)$th query point $x_{j+1}$ is the same, too.  Hence, the probability that the $(j+1)$th query result differs is $\abs{f(x) - \bar f(x)}/2$, which by by condition~\ref{lem:tech:cond:near-const} in Lemma~\ref{lem:tech}, is at most $\delta/2$.

  By induction, the probability that no query point or query result is different in any of the first~$m$ queries is at least
  \begin{equation*}
    (1-\delta/2)^m
  \end{equation*}
  which, by Bernoulli's inequality, is at least $1 - \delta m/2$.
\end{proof}

Now we can finish the proof of Lemma~\ref{lem:tech}.
\begin{proof}[Proof of Lemma~\ref{lem:tech}]
  W.l.o.g., we impose on algorithm \Algo{} that it queries (at least once) the point it eventually ouputs.   This might lead to an additional query, which we account for by the ``$+1$'' on the RHS of~\eqref{eq:ub-on-nqueries}.

  With this modification, the for~\Algo{} to output a point~$x^*$ with $f(x^*)\ge \max_x f(x) - \alpha$, by conditions \ref{lem:tech:cond:max} and \ref{lem:tech:cond:diff}, it is necessary that the algorithm queries at least~1 point $x\notin P_A$.  We will lower bound the random variable $T'_A$ which counts the number of queries of~\Algo{} until (and including) the first query point outside of $P_0$ is requested.

  As explained above, we now synchronize the run of \Algo{} with a coupled run where the samples are taking according to $\lF_{\bar f}$ instead of $\lF_f$.

  Let $m\in\NN$.  We distinguish tow cases.
  \begin{enumerate}[label=\arabic*.)]
  \item\label{sdlkfj:div} The two runs diverge before or at the $m$th query;
  \item\label{sdlkfj:not-div} The two runs do not diverge before or at the $m$th query.
  \end{enumerate}

  In case \ref{sdlkfj:div}, nothing can be said about $T'_A$.  The probability of this happening is at most $\delta m/2$ by Lemma~\ref{lem:divergence}.

  As for the case \ref{sdlkfj:not-div}, note that \Algo{} with samples according to $\lF_{\bar f}$ is a randomized strategy for Bob playing the plateau game: After each ``yes'' answer from Alice, he throws a coin with probability of heads $(1-\eta)/2$, and proceeds dependant on the outcome.  Hence, by Lemma~\ref{lem:stupid-game}, the probability of querying a point not in $P_0$ is at most $pm$.

  This concludes the proof of Lemma~\ref{lem:tech}.
\end{proof}

\subsection{A simple family of PQCs}
We will now give a family of PQCs $C^{(n)}$, $n\in\NN$, on $n$ qubits, which implement functions $f_0^{(n)}\colon \Tn\to\RR$ to which will apply Lemma~\ref{lem:tech}.  In this section, the ``PQC'' will stand synonymous with ``PQC with parameters given in the form~\eqref{eq:ham-unitary} and measurement is of an observable with $\pm1$-eigenvalues.''

We start with a simple observation for easy reference.
\begin{lemma}\label{lem:tensor-PQC}
  Let $g\colon\Tn[n]\to\RR$, $h\colon\Tn[m]\to\RR$, and suppose that the expectation value functions of PQCs $C_f$ and $C_g$ are $f$ and~$g$ resp.  Then $f\colon\Tn[n+m]\to\RR\colon (x,y)\to g(x)h(y)$ can be obtained as the expectation value function of a PQC.
\end{lemma}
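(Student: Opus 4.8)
The plan is to use the obvious parallel composition of the two circuits on disjoint qubit registers, which turns the tensor product of states and observables into the product of expectation values. Write the circuit with expectation value $g$ as acting on a register $\mathcal H_g$ by a unitary $U_g(x)$ assembled from the parameterized gates (and any fixed gates), starting from $\ket0\bra0$ and measuring an observable $M_g$ with $\pm1$-eigenvalues, so that $g(x) = \tr\bigl(M_g\, U_g(x)\ket0\bra0 U_g(x)^\dagger\bigr)$; write the second circuit analogously on a register $\mathcal H_h$ with unitary $U_h(y)$ and observable $M_h$, giving $h(y)=\tr\bigl(M_h\,U_h(y)\ket0\bra0 U_h(y)^\dagger\bigr)$.

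First I would build a circuit on the combined register $\mathcal H_g\otimes\mathcal H_h$, initialized in $\ket0\bra0\otimes\ket0\bra0$, which is again of the form $\ket0\bra0$. Each parameterized gate $e^{-\pi i x_j H_j}$ of the first circuit is replaced by the gate generated by $H_j\otimes\Id$ acting on the first register, and each gate of the second circuit is tensored with identity on the first register; fixed gates are handled the same way. Since operations on the two registers commute, the state after all gates is exactly $\bigl(U_g(x)\ket0\bra0 U_g(x)^\dagger\bigr) \otimes \bigl(U_h(y)\ket0\bra0 U_h(y)^\dagger\bigr)$. Finally I would measure the observable $M_g\otimes M_h$.

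The expectation value of the new circuit is then $\tr\bigl((M_g\otimes M_h)(\rho_g(x)\otimes\rho_h(y))\bigr)$, where $\rho_g,\rho_h$ denote the two single-register states, and the key step is the multiplicativity of the trace on tensor products, $\tr\bigl((A\otimes B)(\rho\otimes\sigma)\bigr)=\tr(A\rho)\,\tr(B\sigma)$, which immediately yields $g(x)h(y)$ as required.

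The only thing to check — and the single place where the special form~\eqref{eq:ham-unitary} of our PQCs matters — is that the construction stays inside the admissible class. The extended generators $H_j\otimes\Id$ inherit the $\pm1$-eigenvalues of $H_j$, since each eigenvalue is the product of a $\pm1$ eigenvalue of $H_j$ with the eigenvalue $1$ of $\Id$; hence the parameterized gates remain of the prescribed form. For the same reason $M_g\otimes M_h$ has $\pm1$-eigenvalues, so the final measurement is legitimate. There is no real obstacle here: the mathematical content is entirely the trace factorization above, and everything else is the bookkeeping of verifying that tensoring with the identity preserves the relevant spectra.
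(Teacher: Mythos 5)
Your proof is correct and takes essentially the same route as the paper's: run the two circuits in parallel on disjoint registers starting from $\rho_0\otimes\rho_0$, measure $M_g\otimes M_h$, and conclude via the factorization $\tr\bigl((A\otimes B)(\rho\otimes\sigma)\bigr)=\tr(A\rho)\tr(B\sigma)$. Your extra check that the tensored generators and the observable $M_g\otimes M_h$ retain $\pm1$-eigenvalues is a detail the paper leaves implicit, but it does not alter the argument.
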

\begin{proof}
  Let $C_g$ ($C_h$) use $q_g$ ($q_h$) qubits, perform the quantum operation $\Ee_g(x)$ on parameter setting $x$ ($\Ee_h(y)$ on parameter setting~$y$), and ultimately measure the observable $M_g$ ($M_h$).  With $\rho_0 := \ket{0}\bra{0}$:
  \begin{align*}
    g(x) &= \tr(M_g \; \Ee_g(x)  \; \rho_0) \\
    h(y) &= \tr(M_h \; \Ee_h(y)  \; \rho_0).
  \end{align*}
  The function~$f$ is realized as an expectation value function by taking $q_g+q_h$ qubits, staring in $\rho_0\otimes \rho_0$, applying the quantum operation $\Ee_g\otimes \Ee_h$, and utlimately measuring the observable $M_g\otimes M_h$.
\end{proof}

In view of Lemma~\ref{lem:tensor-PQC}, we first give a PQC for the function
\begin{equation}\label{eq:delta:1}
  h
  \colon \Tone \to \RR
  \colon x \mapsto \nfrac13 +\cos(2\pi x)/3.
\end{equation}
As for the first part, it can easily be seen that there exists a $\phi\in \lt]0,\pi/4\rt[$ such that
\begin{equation*}
  R := e^{-2\pi i(\cos\phi X + \sin\phi Z)}
\end{equation*}
has the property that, with $x:=\nfrac13$, the quantum operation $\rho \mapsto R^x\rho R^{-x}$ maps~$Z$ into the $X$-$Y$-plane.  The same is then the case for $x=\nfrac23$.
Now take the following PQC $C_1$:
\begin{equation}\label{eq:atomic-circuit}
  \Qcircuit{
    \lstick{\ket0} & \gate{R^{x}} & \meter \\
  }
\end{equation}
where the measurement at the end is of the observable~$Z$.
The expectation value function of~\eqref{eq:atomic-circuit} and~$h$ coincide for $x=0,\nfrac13,\nfrac23$, and since both are trigonometric polynomials of degree~1 (cf.~\cite{GilVidal-Theis:CalcPQC:2018}), they are the same for every $x\in\RR$.  Thus we have realized the function~$h$ in~\eqref{eq:delta:1} as the expectation value function of the parameterized quantum circuit~\eqref{eq:atomic-circuit}.

So we can realize the function~$h$ defined in~\eqref{eq:delta:1}.  Proving that~$f$ as defined in~\eqref{eq:delta:n} has the required properties requires a little work.

Second, we invoke Lemma~\ref{lem:tensor-PQC}, to give us a PQC, $C_n$, whose expectation value function is equal to
\begin{equation}\label{eq:delta:n}
  f^{(n)}_0
  \colon \Tn \to     \RR
  \colon x   \mapsto \prod_{j=1}^n h(x_j).
\end{equation}

\subsection{Conclusion of the proof of Therorem~\ref{thm:sample}}
We are now ready to conclude the proof of Theorem~\ref{thm:sample}.

Just one more definition: For $x\in\Tn$ and $a\in\dTn$, we denote by $d(a,x)$ the number of $j$ with $|x_j-a_j|<\nfrac16$, where for $y\in\RR$, $|y|$ denotes the \textit{Bohr ``norm'':} the smallest element in $y+\ZZ$.  The quantity $d(a,x)$ is the Hamming distance between $a$ and the point resulting from ``rounding''~$x$ to the closest element in $\dTn$ (if ties are broken properly).

With $f_0$ defined as in~\eqref{eq:delta:n}, we let\footnote{No attempts have been made to optimize the constants --- not even the base of the exponentially many queries.}
\begin{equation}\label{eq:def:P_0}
  P_0 := \lt\{ x\in\Tn \mid d(0,x) > n/2 \rt\}.
\end{equation}

We note the following simple fact-plus-definition (of $\delta$) for easy reference.
\begin{lemma}\label{lem:def-+-ieq--delta}
  For all $x\in P_0$,
  \begin{equation*}
    \abs{ f_0(x) } < (2/3)^{n/2}  =: \delta
  \end{equation*}
\end{lemma}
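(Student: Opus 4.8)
The plan is to prove the bound one coordinate at a time. Since $f_0 = f_0^{(n)}$ factorizes as the product~\eqref{eq:delta:n}, we have $\abs{f_0(x)} = \prod_{j=1}^n \abs{h(x_j)}$, so it suffices to control each factor $\abs{h(x_j)}$ and then count how many coordinates of a point $x \in P_0$ are forced to sit away from the origin. The exponent $n/2$ in the target bound is the signal that at least half of the factors must each be bounded by $\nfrac23$, the remaining factors being controlled only by the trivial bound $1$.

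First I would record two pointwise estimates for the single-qubit function $h$ of~\eqref{eq:delta:1}. The crude bound $\abs{h(t)} \le 1$ holds for every $t$ because $\cos$ takes values in $[-1,+1]$. The useful bound is that $\abs{h(t)} \le \nfrac23$ whenever the Bohr norm satisfies $\abs{t} \ge \nfrac16$: in that range $t \bmod 1 \in [\nfrac16,\nfrac56]$, so $2\pi t$ lies in $[\nfrac\pi3, \nfrac{5\pi}3]$ and therefore $\cos(2\pi t) \le \cos(\nfrac\pi3) = \nfrac12$; pushing this through the formula for $h$ (and using $\cos \ge -1$ for the other side of the absolute value) yields the claim. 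Note that the threshold $\nfrac16$ is chosen exactly as the point where $\cos(2\pi\,\cdot\,) = \nfrac12$.

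Next I would read off from~\eqref{eq:def:P_0}, together with the definition of the counting function $d(0,\cdot)$, that every $x \in P_0$ has more than $n/2$ coordinates $x_j$ at Bohr distance at least $\nfrac16$ from the origin; write $k$ for the number of such coordinates, so $k > n/2$. Applying the sharp factor bound to each of these $k$ coordinates and the crude bound $\abs{h}\le 1$ to the remaining $n-k$ coordinates gives
\begin{equation*}
  \abs{f_0(x)} = \prod_{j=1}^{n}\abs{h(x_j)} \le (2/3)^{k}\cdot 1^{\,n-k} < (2/3)^{n/2},
\end{equation*}
the final inequality being strict because $k > n/2$ and $0 < 2/3 < 1$. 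This is exactly the assertion of the lemma, with $\delta = (2/3)^{n/2}$.

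The step I expect to be the real content --- rather than the routine trigonometry --- is the bookkeeping that links membership in $P_0$ to the count $k$, and in particular getting the direction of the inequalities right: one must confirm that on $P_0$ it is the coordinates \emph{away} from the origin, where $h$ is small, that form the majority, since only those supply the $\nfrac23$ factors that drive the exponent down to $n/2$. The two-sided control of $\abs{h}$ away from the origin --- so that the argument is unaffected by any negative values $h$ may take there --- is the only other point that needs a moment's care.
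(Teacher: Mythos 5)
Your proof is correct and is exactly the paper's argument: the paper's own proof is a one-line citation of the same three facts (the product form \eqref{eq:delta:n}, the bound $\abs{h(t)}\le\nfrac23$ for $\abs{t}\ge\nfrac16$, and the definition of $P_0$), which you have merely written out in full, including the correct handling of the strictness via $k>n/2$. One remark: you (rightly) read \eqref{eq:def:P_0} as saying that points of $P_0$ have more than $n/2$ coordinates at Bohr distance \emph{at least} $\nfrac16$ from the origin, consistent with the paper's ``Hamming distance'' gloss and with the requirement $0\notin P_0$ --- the paper's displayed definition of $d(a,x)$ (counting $j$ with $\abs{x_j-a_j}<\nfrac16$) has the inequality direction reversed, a typo under which the lemma would be false (e.g.\ $x=0$ would lie in $P_0$ with $f_0(0)=1$), so your reading is the only one that makes the statement true.
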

\begin{proof}
  This follows directly from the definitions of $f_0$ in~\eqref{eq:delta:n}, the fact that $\abs{h(t)} \le \nfrac23$ holds for all $t\in \Tone$ with $|t|\ge\nfrac16$, and the definition of $P_0$.
\end{proof}

The other important quantity for the use of Lemma~\ref{lem:tech} is~$p$, as defined in~\eqref{eq:def-p}.  Another short lemma-plus-definition.

\begin{lemma}\label{lem:def-+-ieq--p}
  For all $x\in\Tn$, if $A$ is chosen unformly at random in $\dTn$, then
  \begin{equation*}
    \Prb\bigl( P_A \not\ni x \bigr) \le e^{-n/36} =: p.
  \end{equation*}
\end{lemma}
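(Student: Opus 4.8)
The plan is to reduce the statement to a one-sided concentration bound for a sum of independent indicators; the point is that $P_A\not\ni x$ is a lower-tail event for a quantity whose mean sits comfortably above $n/2$. First I would unwind the definitions. Since $P_A=P_0-A$, we have $x\in P_A$ iff $x+A\in P_0$, i.e. iff $d(0,x+A)>n/2$, so that
\[
  \Prb\bigl(P_A\not\ni x\bigr)=\Prb\bigl(d(0,x+A)\le n/2\bigr).
\]
Using that $d(0,\cdot)$ is the Hamming distance to $0$ after rounding each coordinate to the nearest point of $\dTn$, the quantity $d(0,x+A)$ counts exactly those coordinates $j$ for which $x_j+A_j$ lies \emph{outside} the open arc $(-\tfrac16,\tfrac16)$ around $0$ (equivalently, does not round to $0$). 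Thus the event asks that fewer than half of the coordinates be ``far'' from $0$.

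The key step is a per-coordinate estimate. Fix $j$. As $A_j$ ranges uniformly over $\{0,\tfrac13,\tfrac23\}$, the point $x_j+A_j$ takes three values equally spaced by $\tfrac13$ on $\Tone=\RR/\ZZ$. Because the arc $(-\tfrac16,\tfrac16)$ has length exactly $\tfrac13$ and is open, at most one of three points spaced $\tfrac13$ apart can fall inside it. Hence coordinate $j$ is ``close'' with probability at most $\tfrac13$, so it is ``far'' with probability at least $\tfrac23$; and these events are independent across $j$ because the $A_j$ are.

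Writing $d(0,x+A)=\sum_{j=1}^n B_j$ with $B_j\in\{0,1\}$ independent and $\Exp B_j\ge\tfrac23$, we get $\Exp d(0,x+A)\ge\tfrac{2n}{3}$, and $\{d(0,x+A)\le n/2\}$ is a downward deviation of at least $\tfrac{2n}{3}-\tfrac n2=\tfrac n6$ from the mean. Hoeffding's inequality for independent $[0,1]$-valued summands then yields, uniformly in $x$,
\[
  \Prb\bigl(d(0,x+A)\le n/2\bigr)\le\exp\bigl(-2(n/6)^2/n\bigr)=e^{-n/18}\le e^{-n/36},
\]
which is the assertion; taking the supremum over $x$ also recovers the bound on $p$.

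The proof is short, and the only real trap is orientation. Because $d$ records \emph{disagreements} — coordinates far from $0$ — the membership $P_A\not\ni x$ becomes a lower-tail event, which is exactly where concentration helps; reading $d$ as the number of \emph{close} coordinates would invert the inequality and make the bound false. The second point to respect is that the per-coordinate estimate is one-sided (at most $\tfrac13$ to be close), so one must use it as such — or, equivalently, dominate $d(0,x+A)$ by $\mathrm{Bin}(n,\tfrac23)$ — before applying the tail bound.
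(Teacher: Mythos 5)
Your proof is correct and takes essentially the same route as the paper's: the paper uses the complementary ``close'' indicators $D_j=\One\{|x_j-A_j|<\nfrac16\}$ with $\Prb(D_j=1)\in\{0,\nfrac13\}$ and applies Hoeffding to the upper tail $\Prb\bigl(D>\Exp D+n/6\bigr)\le e^{-n/36}$, which is exactly your lower-tail bound for the ``far'' indicators after substituting $B_j=1-D_j$ (and using $-A\sim A$ to trade $x_j+A_j$ for $x_j-A_j$). The only differences are cosmetic, and your closing remark on orientation is apt --- reading $d$ as the Hamming distance (number of far coordinates) is indeed what the argument, and the paper's own proof, require.
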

\begin{proof}
  Fix an arbitrary $x\in\Tn$, and consider the Bernoulli random variables
  \begin{equation*}
    D_j :=
    \begin{cases}
      1, & \text{if $|x_j - A_j| < \nfrac16$},\\
      0, & \text{otherwise.}
    \end{cases}
  \end{equation*}
  Note that the $x\notin P_A$ is equivalent to
  \begin{equation*}
    D := \sum_{j=1}^n D_j > n/2.
  \end{equation*}

  Since the $A_j$ are independent, the $D_j$ are, too.
  Moreover, we have $\Prb( D_j = 1 ) \in \{0,\nfrac13\}$.  By an appropriate version of Hoeffding's inequality,
  %
  %
  we find that
  \begin{align*}
    \Prb\bigl( P_A \not\ni x \bigr)
    &= \Prb( D > n/2 )
    \\
    &\le \Prb( D > \Exp D + n/6 )
    \\
    &\le e^{-n/36}.
  \end{align*}
\end{proof}

We are now ready to put it all together.

\begin{proof}[Proof of Theorem~\ref{thm:sample}]
  With $f_0,P_0,\delta,p$ as above and $\eta := 0$, $\alpha := 1-2\delta$, the conditions of Lemma~\ref{lem:tech} are satisfied, with $\max_x f_0(x) = 1$.  Hence, for any $\alpha$-successful algorithm \Algo{} we have for the number of sample queries, $T_A$, satisfies~\ref{eq:ub-on-nqueries}.  By the definitions of $\delta,p$, we can find an absolute constant $c > 1$ such that
  \begin{equation}\label{eq:prb-bd-numq}
    \Prb( T_A < c^n ) \le c^{-n}.
  \end{equation}
  This concludes the proof of the theorem.
\end{proof}

\subsubsection*{Notes about Theorem~\ref{thm:sample}}
It should be pointed out that in our setup, the algorithm is given the magic ability to know whether it has queried a point with a close-to-maximal function value.  Hence, our definition of ``omnipotence'' may be modified to include that ability.

Theorem~\ref{thm:sample} is set in Las Vegas; it can readily be moved to Monte Carlo by just rearranging the proof.

The bound on the mutual information, \eqref{eq:sample-q:mutinf-exp-small}, can be obtained from~\eqref{eq:ub-on-nqueries} using standard tools; we leave that to the reader.

\section{Outlook}
It is argued in~\cite{Harrow-Napp:low-depth-gradient:2019}, that sampling from derivatives gives and advantage in terms of convergence of gradient descent algorithms, under some assumptions.  It remains to be seen whether querying samples from derivatives of~$f$ removes the exponentially poor information content of sample queries in setting.

\subsection*{Acknowledgements}
The authors would like to thank Vitaly Skachek for a discussion enabling the proof of Lemma~\ref{lem:stupid-game}, which simplified the whole \S\ref{ssec:sample-queries:game}.

This research was supported by the Estonian Research Council, ETAG (\textit{Eesti Teadusagentuur}), through PUT Exploratory Grant \#620.  Ironically, BG was partly supported by United States Air Force Office of Scientific Research (AFOSR) via AOARD Grant ``Verification of Quantum Cryptography'' (FA2386-17-1-4022).


\appendix
\section*{APPENDIX}
\section{Non-adaptivity of the Plateau Game}\label{apx:non-adapt}
Formally, the set of \textsl{adaptive} strategies of Bob would be a sequence $\mu_j$ of mapps from the set
\begin{equation*}
  \Bigl(  \Tn \times\{\texttt{Yes},\texttt{No}\} \Bigr)^{j-1}
\end{equation*}
of his  previous question and  Alice's answers to the set of probability distributions  on $\Tn$.  Since the right-hand entries (Alice's answers) are always the same (unless Bob has won), this collapses to: for every $j\in\NN$ and $x\in(\Tn)^{j-1}$, a probability distribution on $\Tn$ --- in other words, a probability distribution on $(\Tn)^j$.
\section{Coupling of the two runs of the algorithm}\label{apx:coupling}
We assume an infinite stack of independent random numbers $R_0,R_1,R_2,R_3,\dots$, each uniformly distributed in $[-1,+1]$.   Based on them, the sampling from $\lF$ is performed and the random decisions of the algorithm are made:
\begin{itemize}
\item When querying a sample from $\lF_g(x)$, the top of the stack, $R_0$ is considered.  If $R_0 < g(x)$, the query result is~$+1$, otherwise it is $-1$.  $R_0$ is then popped from the stack.
\item Similarly, when the randomized algorithm tosses a coin, the top $R_0$ is consulted in the obvious manner, and popped.
\end{itemize}

We run the algorithm \Algo{} simultanously with two different sample-query distributions: In one of the answer the queries according to the distributions $\lF_f$, in the other run we answer according to $\lF_{\bar f}$.  In the two runs, we use two stacks the same infinite sequence of random numbers.

\end{document}